\def\defeq{\mathrel{\mathop:}=} 
\definecolor{violet}{HTML}{53257F} 
\definecolor{green}{HTML}{257a7f}
\definecolor{green2}{HTML}{4D7F25}
\definecolor{brown}{HTML}{852e29}
\colorlet{nodecolor}{green}
\colorlet{edgecolor}{violet}
\colorlet{arrowcolor}{brown}
\colorlet{highlightcolor}{green2}
\colorlet{highlightcolor2}{brown}
\newtheorem{theorem}{Theorem}
\newtheorem{corollary}{Corollary}
\newtheorem{lemma}{Lemma}
\newcommand{\GHZ}[1]{\ensuremath{\ket{\mathrm{GHZ}}_{#1}}} 
\newcommand{\Lin}[1]{\ensuremath{\ket{\mathrm{L}}_{#1}}} 
\newcommand{\node}[1]{\ensuremath{#1}}
\newcommand{\network}[0]{\ensuremath{V_{L}}}
\newcommand{\measured}[0]{\ensuremath{V_{M}}}
\newcommand{\GHZnodes}{\ensuremath{V_{G}}}
\newcommand{\GHZsize}{\ensuremath{\abs{\GHZnodes{}}}}
\newcommand{\measuredsize}{\ensuremath{\abs{\measured{}}}}
\newcommand{\ie}{{i.e.}~} 
\newcommand{\eg}{{e.g.}~} 
\newcommand{\GHZtext}{\ensuremath{\mathrm{GHZ}}}
\begin{document}

\preprint{APS/123-QED}

\title{Extracting GHZ states from linear cluster states}

\author{J.\ de Jong$^{1}$, F.\ Hahn$^{1, 2}$, N.\ Tcholtchev$^{3}$, M.\ Hauswirth$^{1, 3}$, and A.\ Pappa$^{1, 3}$}
\affiliation{$^1$ Electrical Engineering and Computer Science Department, Technische Universit{\"a}t Berlin, 10587 Berlin, Germany}
\affiliation{$^2$ Dahlem Center for Complex Quantum Systems, Freie Universität Berlin, 14195 Berlin, Germany}
\affiliation{$^3$  Fraunhofer Institute for Open Communication Systems - FOKUS, 10589 Berlin, Germany}

\date{\today}

\begin{abstract}
\noindent 
Quantum information processing architectures typically only allow for nearest-neighbour entanglement creation. In many cases, this prevents the direct generation of \GHZtext{} states, which are commonly used for many communication and computation tasks. 
Here, we show how to obtain \GHZtext{} states between nodes in a network that are connected in a straight line, naturally allowing them to initially share linear cluster states.
We prove a strict upper bound of $\lfloor (n+3)/2 \rfloor$ on the size of the set of nodes sharing a \GHZtext{} state that can be obtained from a linear cluster state of $n$ qubits, using local Clifford unitaries, local Pauli measurements, and classical communication.
Furthermore, we completely characterize all selections of nodes below this threshold that can share a \GHZtext{} state obtained within this setting. 
Finally, we demonstrate these transformations on the \texttt{IBMQ Montreal} quantum device for linear cluster states of up to $n=19$ qubits.
\end{abstract}

\maketitle

\section{Introduction}\label{sec:introduction}
Recent years have seen exciting developments in quantum computation and communication, both in theory and experiment. Building upon the year-long research on bipartite settings, focus has now also turned towards multipartite settings, where multiple vertices in a network share quantum resources between them. While the correlations of Greenberger-Horne-Zeilinger (\GHZtext{}) states~\cite{GreenbergerBeyond1989} have naturally been the first to explore, other types of graph states~\cite{heinEntanglementGraphStates2006} have also been extensively examined
\cite{schlingemannErrorSyndromeCalculation2004,heinMultipartyEntanglementGraph2004,bouchetGraphicPresentationsIsotropic1988,hahnLimitationsNearestneighborQuantum2022}. The possible transformations between quantum states is a topic that is heavily studied \cite{briegelPersistentEntanglementArrays2001,vandennestGraphicalDescriptionAction2004,vandennestLocalUnitaryLocal2005,dahlberg2018transforming,dahlberg2020transform}, and while several hardness results have emerged \cite{dahlberg2020counting,Dahlberg2020transforminggraph}, a lot of practical questions remain unanswered. 

But why should we be interested in transforming one quantum state to another in the first place? One reason can be that it is not always possible to create the exact state that is necessary to perform a specific task, and we need to `retrieve' it  from some other state that is more practical to build.
For example, while the underlying network architecture might allow for nearest-neighbour interactions, it might not allow for the direct distribution of large \GHZtext{} states between distant parties. 
Here we show that there is an indirect remedy for this deficiency using suitable transformations of the distributed quantum states.
We focus on the transformation of \textit{linear cluster states}~\cite{briegelPersistentEntanglementArrays2001}, that arise naturally in linear networks. In particular we investigate the transformation to \GHZtext{} states, which are widely used in many quantum communication tasks including \textit{anonymous transmission}~\cite{christandlQuantumAnonymousTransmissions2005}, \textit{secret sharing}~\cite{HilleryQuantumSecretSharing1999,GHZ_Secret_sharing} and \textit{(anonymous) conference key agreement}~\cite{murtaQuantumConferenceKey2020,hahnAnonymousQuantumConference2020,grasselli2022secure}.

Such transformations require the removal of some of the qubits from the state by measuring them, such that only a selected subset of the qubits of the resource linear cluster state can in the end belong to the target \GHZtext{} state. We refer to these transformations as \GHZtext{} \textit{extractions}.
A previous study \cite{hahnQuantumNetworkRouting2019} showed how to extract three- and four-partite \GHZtext{} states from linear cluster states. Moreover, other works~\cite{frantzeskakis2022extracting,mannalathMultipartyEntanglementRouting2022} study a specific selection of the qubits of an odd-partite resource state.
Here, we conclude this study by providing a complete characterisation of which \GHZtext{} extractions are possible and which are not. Very importantly, we provide a tight upper bound to the size of the largest \GHZtext{} state that can be extracted, equal to $\lfloor (n+3)/2 \rfloor$; interestingly this is slightly higher than the bound of $n/2$ conjectured in Ref.~\cite{briegelPersistentEntanglementArrays2001} and than the sizes of the states extracted in the aforementioned studies~\cite{frantzeskakis2022extracting,mannalathMultipartyEntanglementRouting2022}.
In addition to our theoretical analysis, we perform demonstrations of implementations of the \GHZtext{} extractions from linear cluster states with  
$n \in \{5, 7, \dots, 19\}$
qubits on the \texttt{IBMQ Montreal} device.

Our manuscript is organized as follows: The notation, technical terminology and main definitions are introduced in Section~\ref{sec:notationterminology}. Section~\ref{sec:results} contains the main theoretical results. 
In Section~\ref{sec:demonstrations}, the demonstrations are introduced, discussed, and their results presented. 
Finally, Section~\ref{sec:discussion} discusses the obtained results and the opportunities for future research. 
The technical details are diverted to topical appendices: Appendix~\ref{app:proof_theorem} contains the proof of an important lemma stated in the theoretical section, Appendix~\ref{app:corrections} contains technical details regarding the post-processing steps during the extractions, and Appendix~\ref{app:fidelityestimate} contains technical details regarding the data analysis of the demonstrations section.

\begin{figure*}
    \centering
    \includegraphics[width=\linewidth, clip, trim=1cm 3cm 1cm 3.5cm]{./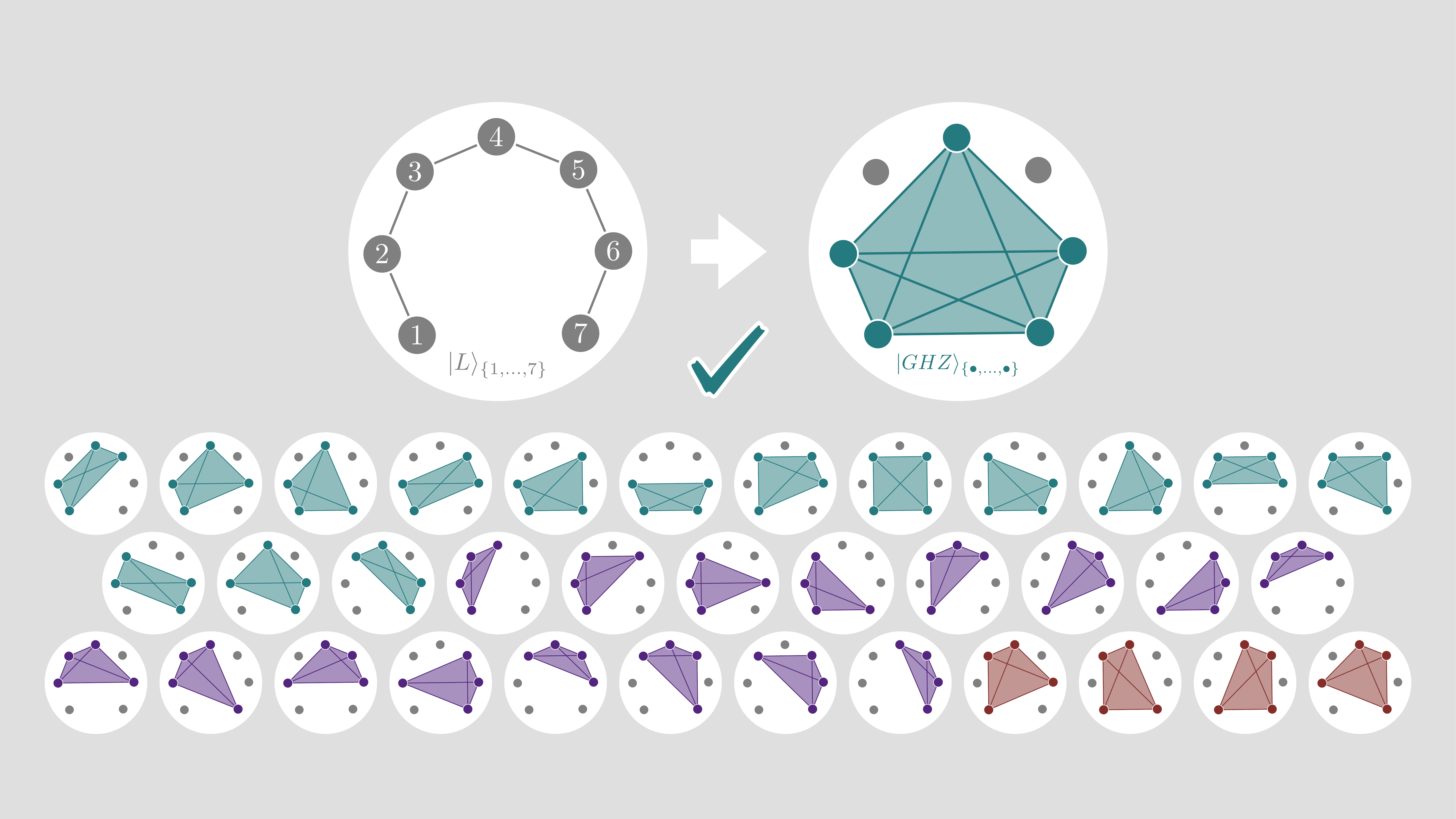}
    \caption{
    \textbf{Example of extracting \GHZtext{} states from a \textcolor{black}{linear cluster state with seven qubits}:}
    \textcolor{black}{The only $5$-partite \GHZtext{} state that can be extracted from this resource is on the qubits corresponding to $1,2,4,6,7$ and is highlighted in green}. 
    For $4$-partite \GHZtext{} states, we also \textcolor{black}{highlight all $15$ possible extraction patterns in green}, while \textcolor{black}{the patterns in brown are impossible due to Lemma~\ref{lem:2islands}} and \textcolor{black}{the patterns shown in violet are impossible due to both Corollary~\ref{cor:3island} and Lemma~\ref{lem:2islands}}.
    Note that due to Theorem~\ref{thm:bound} it is impossible to extract \GHZtext{} states with six or more qubits from this resource -- it is however trivially possible to extract all combinations of three-partite \GHZtext{} states.
    }
    \label{fig:network_overview}
\end{figure*}

\section{Notation and terminology}\label{sec:notationterminology}
In this work, two quantum graph states 
play a central role: 
we define \textit{linear cluster states} \Lin{} and \textit{GHZ states} \GHZ{} as
\begin{equation}
    \begin{split}
        \Lin{\{1, \dots, n\}} &\defeq 
        \frac{1}{2^{n}}\bigotimes_{i = 1}^{n}\left( \ket{0} + \ket{1}\sigma_{z}^{i+1} \right) \\
        \GHZ{\{1, \dots, m\}} &\defeq \frac{1}{\sqrt{2}} 
        \left(
        \bigotimes_{i = 1}^{m}\ket{0} 
        + 
        \bigotimes_{i = 1}^{m}\ket{1}\right)
    \end{split}
\end{equation}
and $\Lin{V}$, \GHZ{V} as the states corresponding to the vertex set $V$. When context permits, with \eg \Lin{n} we denote the linear cluster state of size $n$.

Our resource state is the $n$-partite linear cluster state $\Lin{V_L}$. 
As a graph state it corresponds to a line graph on the vertices $\network{} \defeq \{1,2,\ldots, n\}$.
Here, each vertex $\node{i}$ corresponds to the $i$-th qubit of $\Lin{V_L}$ and the edges of the graph correspond to nearest-neighbour entangling controlled phase gates. 
This structure allows us to use the terms \textit{left} and \textit{right} neighbours of \node{i} to indicate any vertices \node{h}, \node{j} with $h<i$, $i<j$, respectively; e.g.~the \textit{direct} left and right neighbours of \node{i} are $i\pm 1$.

Let $\GHZnodes{} \subset \network{}$ be a set of vertices for which we can extract a \GHZtext{} state from the linear cluster resource state.
Performing Pauli measurements on the qubits corresponding to $\measured \defeq \network \setminus \GHZnodes{}$, we obtain a post-measurement state which is local-Clifford equivalent to the \GHZ{\GHZnodes{}} state. 
By performing local operations based on the measurement outcomes, the state can then be locally transformed into this \GHZtext{} state.

This construction allows for \GHZnodes{} to inherit the neighbour structure from the linear network $\network{}$: For a vertex $j \in \GHZnodes{}$, we use $j_{-}$ and $j_{+}$ to indicate the left and right neighbour of $j$ in \GHZnodes{}, respectively.
We refer to the smallest and largest element of \GHZnodes{} as the \textit{boundaries} of the \GHZtext{} state. We finally define any selection of consecutive vertices $\node{i},\node{i+1}, \dots, \node{i+k} \in \GHZnodes{}$ as a \textit{$k$-island}.

\section{Main results}\label{sec:results}
We now examine what are the different types of \GHZtext{} states one can obtain from a given linear cluster state. We first provide an upper bound for the size \GHZsize{} of the extracted \GHZtext{} state, and we then show how to saturate it. In order to achieve this, we use Lemma~\ref{lem:2islands}, which provides an impossibility result for $2$-islands (the proof can be found in App.~\ref{app:proof_theorem}).

\begin{lemma}\label{lem:2islands}
    No $2$-island can have both a left and a right neighbour in \GHZnodes{}. If two vertices $\node{i}, \node{i+1}$ are in $\GHZnodes{}$, then there is either no vertex to the left of \node{i} or no vertex to the right of \node{i+1}.
\end{lemma}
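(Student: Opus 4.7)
I would prove the lemma by invoking the well-known correspondence between graph-state transformations under local Cliffords, Pauli measurements, and classical communication on one side, and the combinatorial operations of local complementation and vertex deletion on the underlying graph on the other: measuring a vertex in the $Z$, $Y$, or $X$ basis corresponds (up to a Clifford correction, absorbed by the classical communication) to a deletion, to a local complementation followed by a deletion, and to a pivot followed by a deletion, respectively. Under this correspondence, the \GHZtext{} extraction on $\GHZnodes{}$ amounts to producing the complete graph $K_{\GHZsize{}}$ (or equivalently any graph in its LC-orbit, such as a star) as a vertex-minor of the linear cluster graph on the vertex set $\GHZnodes{}$. The strategy is then to exhibit an invariant that is incompatible with this target whenever the $2$-island $\{i, i+1\}$ has $\GHZnodes{}$-vertices on both sides.

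The natural invariant is the $\mathbb{F}_2$ cut-rank $\rho_G(T) \defeq \mathrm{rank}\,A_G[T, V(G) \setminus T]$, which is preserved by local complementation and non-increasing under vertex deletion, so $\rho_H(T) \leq \rho_G(T)$ for every $T \subseteq V(H) \subseteq V(G)$. A direct calculation on the linear cluster graph gives $\rho(\{i, i+1\}) = 2$: the cut matrix has a single $1$ at entry $(i, i-1)$ and a single $1$ at entry $(i+1, i+2)$, and these two columns are linearly independent. On the other hand, in any star graph or complete graph on $\GHZnodes{}$ the cut matrix for $\{i, i+1\}$ has rank exactly $1$ (a single non-zero row when the centre is $i$ or $i+1$, two identical non-zero rows otherwise). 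Hence it suffices to show that no valid extraction can drive $\rho_H(\{i, i+1\})$ from $2$ down to $1$.

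The main step, and also the main obstacle, is to show that any strict decrease of $\rho(\{i, i+1\})$ along the sequence of local complementations and vertex deletions must be accompanied by a disconnection of $\GHZnodes{}$, which by itself already rules out local equivalence to any \GHZtext{} state. Since local complementations preserve the cut-rank identically, the drop can only occur at a deletion step, namely of some vertex $u \notin \{i, i+1\}$ whose column in the $\{i, i+1\}$-cut matrix is not a linear combination of the remaining columns. Exploiting the fact that in the linear cluster graph the only bridge between $\{1, \dots, i\}$ and $\{i+1, \dots, n\}$ is the edge $(i, i+1)$, and that any new bridge created by a subsequent local complementation must still be incident to $i$ or $i+1$, I would show through careful bookkeeping that the ``critical'' columns of the $\{i, i+1\}$-cut correspond exactly to vertices whose deletion severs the last remaining bridge between the two halves. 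Combining this with the hypothesis that $\GHZnodes{}$ contains vertices both in $\{1, \dots, i-1\}$ and in $\{i+2, \dots, n\}$ then yields the desired contradiction.
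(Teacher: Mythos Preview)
Your approach via vertex-minors and cut-rank is genuinely different from the paper's proof, which works directly in the stabilizer picture: the paper isolates the four linear-cluster generators $l_{i-1},l_i,l_{i+1},l_{i+2}$ (the only ones touching $\{i,i+1\}$) and shows, by a rank argument in the binary symplectic representation, that no invertible recombination of them can reproduce the four \GHZtext{} generators restricted to $\{i,i+1\}$, since the latter satisfy the linear dependence $\beta=\alpha\gamma$.

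Your setup is correct---the vertex-minor correspondence is valid, and the cut-rank computations $\rho_{L_n}(\{i,i+1\})=2$ versus $\rho_{\mathrm{star}/K_m}(\{i,i+1\})=1$ are right. The problem is that the monotonicity $\rho_H(T)\le\rho_G(T)$ points the \emph{wrong way}: to obtain a contradiction you need a \emph{lower} bound on $\rho_H(\{i,i+1\})$, and monotonicity only gives an upper bound. You recognise this and propose to show that any strict drop in $\rho(\{i,i+1\})$ along the LC/deletion sequence forces a disconnection of $\GHZnodes{}$. But this ``main step'' is precisely where the argument is missing. The sketch you offer---that critical columns correspond to vertices whose removal ``severs the last remaining bridge between the two halves'', and that ``any new bridge created by a subsequent local complementation must still be incident to $i$ or $i+1$''---does not survive scrutiny: local complementation at a vertex $v$ toggles \emph{all} edges inside $N(v)$, so after a few LCs there is no reason edges crossing the original cut should remain incident to $i$ or $i+1$, and the very notion of ``the two halves'' is no longer tied to the initial left/right partition of the path. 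The phrase ``careful bookkeeping'' is doing all the work, and it is not clear what invariant you would actually track.

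It may well be that every \emph{connected} vertex-minor of $P_n$ containing $i,i+1$ together with vertices on both sides has $\rho(\{i,i+1\})=2$; if so, your strategy would succeed. But establishing that statement requires a genuine argument (for instance via split decompositions of distance-hereditary graphs, or by exploiting simultaneously the rank-$1$ cuts $\{1,\dots,i-1\}$, $\{1,\dots,i\}$, $\{1,\dots,i+1\}$), not the bridge heuristic you describe. As written, the proposal has a real gap at its central step.
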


Lemma~\ref{lem:2islands} implies that all vertices $\node{i}$ in the target \GHZtext~state must be `isolated' in the linear cluster state; $\node{i-1}$ and $\node{i+1}$ cannot be in $\GHZnodes{}$ (with the exception of the boundaries). A corollary for $3$-islands follows directly:
\begin{corollary}\label{cor:3island}
    If \GHZnodes{} contains a $3$-island, then $\GHZsize{}=3$.
\end{corollary}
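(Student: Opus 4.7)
The plan is to apply Lemma~\ref{lem:2islands} twice to the two overlapping pairs of consecutive vertices contained in a $3$-island. Suppose $\GHZnodes{}$ contains a $3$-island, i.e.\ three consecutive vertices $\node{i}, \node{i+1}, \node{i+2} \in \GHZnodes{}$. Then in particular both $\{\node{i}, \node{i+1}\}$ and $\{\node{i+1}, \node{i+2}\}$ are pairs of consecutive vertices in $\GHZnodes{}$, so Lemma~\ref{lem:2islands} applies to each.

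First, I would consider the pair $\{\node{i}, \node{i+1}\}$. By Lemma~\ref{lem:2islands}, either no vertex to the left of \node{i} lies in \GHZnodes{}, or no vertex to the right of \node{i+1} does. Since $\node{i+2} \in \GHZnodes{}$ is a vertex to the right of \node{i+1}, the first alternative must hold: \GHZnodes{} contains no vertex strictly to the left of \node{i}. By the symmetric argument applied to the pair $\{\node{i+1}, \node{i+2}\}$, using the fact that $\node{i} \in \GHZnodes{}$ lies to the left of \node{i+1}, we conclude that \GHZnodes{} contains no vertex strictly to the right of \node{i+2}.

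Combining the two conclusions, every vertex of \GHZnodes{} lies in $\{\node{i}, \node{i+1}, \node{i+2}\}$, and by assumption the reverse inclusion holds as well. Hence $\GHZnodes{} = \{\node{i}, \node{i+1}, \node{i+2}\}$ and $\GHZsize{} = 3$, as claimed. There is no real obstacle here: the statement is a direct double application of Lemma~\ref{lem:2islands}, with the only subtlety being to notice that a $3$-island always contains two overlapping $2$-islands, one of which rules out extension to the left and the other extension to the right.
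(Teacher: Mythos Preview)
Your proof is correct and follows essentially the same approach as the paper: both arguments apply Lemma~\ref{lem:2islands} to the two overlapping $2$-islands $\{\node{i},\node{i+1}\}$ and $\{\node{i+1},\node{i+2}\}$ inside the $3$-island. The paper phrases it as a contradiction from assuming $\GHZsize{}\geq 4$, while you argue directly, but the content is identical.
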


\begin{proof}
     Let $\node{i}, \node{i+1}, \node{i+2}$ be a $3$-island in $\GHZnodes{}$ and assume that $|\GHZnodes{}|\geq 4$, \ie that we have $h<i$ or $j>i+2$ in \GHZnodes{}.
     This implies that either $\node{i},\node{i+1}$ form a $2$-island with both left-neighbour \node{h} and right-neighbour \node{i+2} or $\node{i+1},\node{i+2}$ form a $2$-island with both left-neighbour \node{i} and right-neighbour \node{j}.
    Both are in direct contradiction to Lemma~\ref{lem:2islands}.
\end{proof}
By the same argument, $k$-islands with $k \geq 4$ are impossible. Ultimately, such $k$-islands would contain $3$-islands in contradiction to Corollary~\ref{cor:3island}. Figure~\ref{fig:network_overview} illustrates examples.

This allows us to calculate the upper bound to $\GHZsize{}$.

\begin{theorem}\label{thm:bound}
The size of a \GHZtext{} state 
extractable
from an $n$-partite linear cluster state via local Clifford operations, local Pauli measurements, and local unitary corrections, is upper-bounded as $\GHZsize \leq \left\lfloor \frac{n+3}{2} \right\rfloor$.
\end{theorem}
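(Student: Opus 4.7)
The plan is to combine Lemma~\ref{lem:2islands} and Corollary~\ref{cor:3island} into a structural characterisation of $\GHZnodes$ as a subset of $\{1,\ldots,n\}$, and then extract the bound by a simple counting argument on the minimum span of such a subset.

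First I would dispose of the degenerate case where $\GHZnodes{}$ contains a $3$-island: by Corollary~\ref{cor:3island} this forces $\GHZsize=3$, and since any $3$-island requires $n\geq 3$, one checks $3\leq \lfloor (n+3)/2\rfloor$. The remaining case, by Lemma~\ref{lem:2islands} and the remark that $k$-islands with $k\geq 4$ are impossible, is that $\GHZnodes$ decomposes into $1$-islands and at most two $2$-islands, and any $2$-island must touch a boundary of $\GHZnodes$ (i.e.\ contain the smallest or largest element of $\GHZnodes$).

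Next, write the ordered elements of $\GHZnodes$ as $v_1<v_2<\cdots<v_m$ with $m=\GHZsize$. Lemma~\ref{lem:2islands} forces $v_{i+1}\geq v_i+2$ for every consecutive pair $(v_i,v_{i+1})$ except when this pair is a boundary $2$-island, i.e.\ $(i,i+1)=(1,2)$ or $(i,i+1)=(m-1,m)$, in which case only $v_{i+1}\geq v_i+1$ is required. I would then split into three cases according to the number of boundary $2$-islands present ($0$, $1$, or $2$, the last requiring $m\geq 4$) and telescope the gaps from $v_1$ to $v_m$. The respective minimum spans $v_m-v_1$ are $2(m-1)$, $2m-3$, and $2m-4$. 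Combining each with $v_m-v_1\leq n-1$ yields
\begin{equation}
m\leq\left\lfloor \tfrac{n+1}{2}\right\rfloor,\quad m\leq \left\lfloor \tfrac{n+2}{2}\right\rfloor,\quad m\leq \left\lfloor \tfrac{n+3}{2}\right\rfloor
\end{equation}
in the three cases. The third (two-boundary) case is the most permissive and supplies the claimed bound.

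There is no real obstacle beyond careful bookkeeping: all the work is already done by Lemma~\ref{lem:2islands}, and the theorem reduces to the combinatorial observation that a subset of $\{1,\ldots,n\}$ with internal gaps of at least $2$ and possible unit gaps only at its two ends has size at most $\lfloor (n+3)/2\rfloor$. The only subtlety to watch is that the two-boundary case genuinely requires $m\geq 4$ so that the two boundary $2$-islands are disjoint; for $m\leq 3$ the bound is already trivial since $\lfloor (n+3)/2\rfloor\geq 3$ for all $n\geq 3$.
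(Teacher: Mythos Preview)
Your proof is correct and takes essentially the same approach as the paper: both extract from Lemma~\ref{lem:2islands} the fact that $2$-islands can occur only at the two boundaries of $\GHZnodes$, so all but at most two consecutive gaps in $\GHZnodes$ are $\geq 2$, and then turn this into the bound by a counting argument. The paper phrases the counting step by exhibiting the extremal configuration (placing $1,2,n{-}1,n$ in $\GHZnodes$ and alternating in between) and computing $\GHZsize=n-\measuredsize$ separately for odd and even $n$, whereas you telescope the gaps and split into cases on the number of boundary $2$-islands; the substance is identical.
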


\begin{proof}
    As there are at most two $2$-islands, for every other $\node{i}$ in $\GHZnodes{}$ both $\node{i \pm 1}$ were measured. 
    Thus, to maximize $\GHZsize$, we may have $\node{1}, \node{2}, \node{n-1},\node{n}$ in $\GHZnodes{}$, and $\measured{}$ containing every other vertex in between: For $n$ odd, $\measured{} = \{\node{3},\node{5},\dots,\node{n-2}\}$; for $n$ even $\measured{} = \{\node{3},\node{5},\dots,\node{n-5},\node{n-3},\node{n-2}\}$ \footnote{
        In the case of $n$ being even, there is more than one such pattern. While we have chosen here to measure the two consecutive vertices, \node{n-3} and \node{n-2}, other possibilities would have been to measure consecutive vertices further to the left and measure only the even vertices to the right. Another option would have been to measure not two consecutive vertices, but a vertex of one of the $2$ islands, \ie either \node{1},\node{2},\node{n-1} or \node{n}. It is important to note that all resulting sets $\measured{}$ have the same size. 
    }.
    In the even case, $\node{n-2}$ must be measured due to Corollary~\ref{cor:3island}.
    In both cases $\GHZsize{}=n-|\measured{}|$ is upper bounded by $\left\lfloor \frac{n+3}{2} \right\rfloor$.
\end{proof}

For example, the largest \GHZtext{} state that can be extracted from the $7$-qubit linear cluster state shown in Figure \ref{fig:network_overview} is the \GHZ{5} state where $\GHZnodes{} = \{1,2,4,6,7\}$ and $\measured{} = \{3,5\}$. 
Figure \ref{fig:network_overview} further shows all possible and impossible selections of \GHZnodes{} to extract the \GHZ{4} state.

We now show that there is a set of measurements that saturates the bound of Theorem~\ref{thm:bound} by explicitly giving such a measurement pattern.
For $n\leq 5$ this pattern was shown in Ref.~\cite{hahnQuantumNetworkRouting2019}.
For the general case, let us consider a case distinction with respect to the parity of $n$:

First, for odd $n$ we can choose $\measured=\{2i+1\}_{i = 1}^{\frac{n-3}{2}}$ and every corresponding qubit to be measured in the $\sigma_{x}$-basis; we refer to this measurement pattern as the \textit{maximal} pattern. Below, we show that this pattern actually gets the desired \GHZtext{} state.

The linear cluster state is a stabilizer state, \ie it is an element of the shared $+1$ eigenspace of the operators $\{l_{i} = \sigma_{z}^{i-1}\sigma_{x}^{i}\sigma_{z}^{i+1}\}_{i \in \network}$, where $\sigma_{z}^{0}$ and $\sigma_{z}^{n+1}$ are set equal to the identity. This set of operators forms the set of canonical \textit{generators} of an Abelian subgroup of the $n$-qubit Pauli group known as the \textit{stabilizer} of the linear cluster states. For an overview of the stabilizer formalism and stabilizer measurements in particular see \cite{GottesmanThesis1997},
\cite{GottesmanHeisenbergRepresentation1998}.

Consider the generator transformation 
\begin{equation}
    l_{2} \rightarrow l_{2}^{'} =  l_{2}l_{4} \dots l_{n-4}l_{n-2} \defeq \sigma_{z}^{1}\sigma_{z}^{n} \prod_{2i \in \network}\sigma_{x}^{2i},
\end{equation}
which ensures that $l^{'}_{2}$ and all odd-indexed generators commute with all measurement operators $\{\sigma_{x}^{j_{0}}\}_{j_{0}\in \measured{}}$. 
The post-measurement state is determined by replacing the other \measuredsize{} generators $\{l_{2i}\}_{i=2}^{\frac{n-1}{2}}$ with the measurement operators --together with a multiplicative phase depending on the respective measurement outcome. Then (after removing the support on the measured qubits and applying a Hadamard transformation to \node{1} and \node{n}) the post measurement state on \GHZnodes{} is characterized by the generators 
$\sigma_{x}^{\GHZnodes{}}$ and $\{m_{j_{0}}\sigma_{z}^{j_{0}} \sigma_{z}^{j_{0+}}\}_{j_{0} \in \GHZnodes{} \setminus \{n\}}$, where the $m_{j_{0}} = \pm 1$ are phases due to the measurement outcomes. These phases can be accounted for by applying $\sigma_{x}$-operations to a selection of the nodes, recovering the generators of the \GHZ{\GHZnodes{}} state.
The number of measurements implies $\GHZsize{} = n - \abs{\{3, 5,\dots,n-2\}} = \frac{n+3}{2}$ which saturates the bound for odd $n$.

Second, for even $n$ it suffices to observe that a $\sigma_{z}$-measurement on the qubit corresponding to \node{n} yields a linear cluster state \Lin{\{1,2,\dots, n-1\}} up to a randomized $\sigma_{z}^{n-1}$-correction depending on the measurement outcome. In analogy to the odd parity case we then obtain $\measured = \{3,5, \dots, n-3, n\}$ such that $\GHZsize{} = n - \measuredsize{} = \frac{n+2}{2} = \left\lfloor \frac{n+3}{2} \right\rfloor$ for even $n$.

Note that the even-case analysis above also applies for measuring an `internal' node in the $\sigma_{y}$-basis, rather than the first or last; this does introduce a Clifford rotation on the two neighbours of the node which needs to be accounted for~\cite{githublink}. The resulting state is then also LOCC equivalent to an ($n-1$)-partite linear cluster state on the remaining nodes, from which in turn a \GHZ{\frac{n+2}{2}} state can be extracted through the maximal pattern. This approach can be extended to more measurements, where additional `inside' nodes are measured in the $\sigma_{y}$-basis, and `outside' nodes are measured in the $\sigma_{z}$-basis. It is straightforward to see that any choice \GHZnodes{} allowed by Lemma~\ref{lem:2islands} can be seen as arising from such a setting. 

Finally, note that while Lemma~\ref{lem:2islands} does allow $2$-islands on the boundaries of the extracted \GHZtext{} states, they do not necessarily have to be contained in them. 
For example, \GHZ{\{1,3,5,7\}} can be extracted from \Lin{\{1,\ldots,7\}} as shown in Figure~\ref{fig:network_overview}. 
Rigorously stated, this pattern does not arise from one of the maximal patterns defined above, but can instead be considered as a maximal pattern \GHZ{\{0,1,3,5,7,8\}} extracted from \Lin{\{0,\ldots,8\}}.  
Here, the additional qubits corresponding to $0$ and $8$ are just ``virtual'' and not really there; they simply help visualize all possible patterns: 
\GHZ{\{1,3,5,7\}} can be extracted from the ``virtual'' state \GHZ{0,1,3,5,7,8} by measuring qubits $0$ and $8$ in the $\sigma_x$-basis.
The measurements on the other qubits are unaffected by this; the physical measurements of $2,4,6$ to obtain \GHZ{\{1,3,5,7\}} from \Lin{\{1,\ldots,7\}} are exactly the same as the ones that would be required to obtain \GHZ{\{0,1,3,5,7,8\}} from the ``virtual'' \Lin{\{0,\ldots,8\}}.
In this sense, all possible selections of \GHZnodes{} can be seen as subsets of the maximal measurement patterns defined above. 

These measurement patterns result in states LOCC equivalent to \GHZtext{} states; for explicit calculations of the necessary corrections to obtain the \GHZtext{} states themselves we refer to the supplementary material in \cite{githublink}.

\section{Demonstrations}\label{sec:demonstrations}
We used the \texttt{IBMQ Montreal} device to demonstrate our protocol for the maximal extraction of \GHZtext{} states from resource linear cluster states. For odd $n \in \{5, 7, \dots, 19\}$ we prepared the state
\begin{equation}\label{eq:rotatedlin}
    \ket{\psi}_{n} = \bigotimes_{i \text{ odd}}H^{i} \Lin{\{1,\ldots, n\}},
\end{equation}
\ie the linear cluster state with every odd qubit rotated to the $\sigma_{x}$-basis. 
We then extract \GHZtext{} states for $m \in \{\frac{n+3}{2}\}_{n} = \{4,5,\dots, 11\}$ using the maximal pattern described in the previous section.

\begin{figure}[ht]
	\includegraphics[width = 0.9\linewidth]{./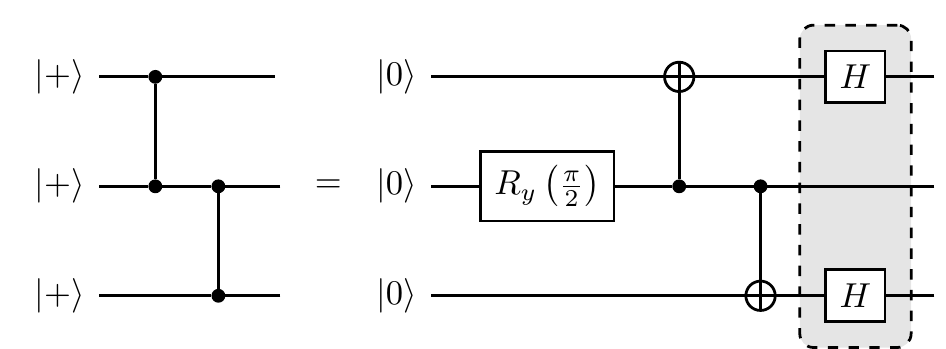}
	\caption{The circuit on the left and on the right are equal; the circuit on the left is the standard preparation of the \Lin{3} state. The generalization to \Lin{n} for higher (odd) $n$ follows naturally. The circuit on the right is the compiled version for the \texttt{IBMQ Montreal} chip. By not implementing the single-qubit gates in the grey box, the circuit depth is reduced by $25\%$. This results in a rotated linear cluster state $\ket{\psi}_{n}$ as defined in Eq.\eqref{eq:rotatedlin}. This state carries the same entanglement properties as the \Lin{n} state, and most notably can still be used to extract \GHZ{} states by adapting the measurement bases.
	}\label{fig:CZidentity}
\end{figure}

Implementing $\ket{\psi}_{n}$ instead of  \Lin{\{1,\ldots, n\}} allows us to reduce the circuit depth of the preparation circuit by one, when compiling for the gateset of the \texttt{IBMQ Montreal} device (Pauli-basis rotations, $CNOT$; see Figure~\ref{fig:CZidentity}).
When considering the \GHZtext{} extraction, this approach has further benefits; the necessary Hadamard transformations on the first and last qubit have, in essence, been applied `in advance', and the $\sigma_{x}$-measurements prescribed by the maximal pattern become $\sigma_{z}$-measurements, which are native to the device. The Pauli-based flips due to the measurement outcomes that are necessary to obtain the \GHZtext{} state can be performed in post-processing, as all the subsequent measurements on the \GHZtext{} state itself are in the Pauli basis.

To benchmark the results, we compute an estimate for the lower bound of the fidelity for both the linear cluster states and the \GHZtext{} states extracted from them. For the linear cluster states we use methods adapted from \cite{TiurevFidelitymeasurement2022} using insights originally presented in \cite{TothDetectingGenuine2005}; two measurement settings suffice to estimate the lower bound --one in which all qubits are measured in the $\sigma_{z}$-basis, and one in which all qubits are measured in the $\sigma_{x}$-basis. For the \GHZtext{} states we derive a similar technique. Again, two measurement settings suffice -- one where all the qubits of the \GHZtext{} state are measured in the $\sigma_{z}$-basis, and one where all the qubits of the \GHZtext{} are measured in the $\sigma_{x}$-basis. Both these measurement settings are performed in parallel to the $\sigma_{z}$-measurements of the qubits not included in the \GHZtext{} state that are required for the extraction. 
All measurements are repeated $32000$ times to calculate estimates for the expectation values. 

\begin{figure}[h!]
	\includegraphics[width = \linewidth]{./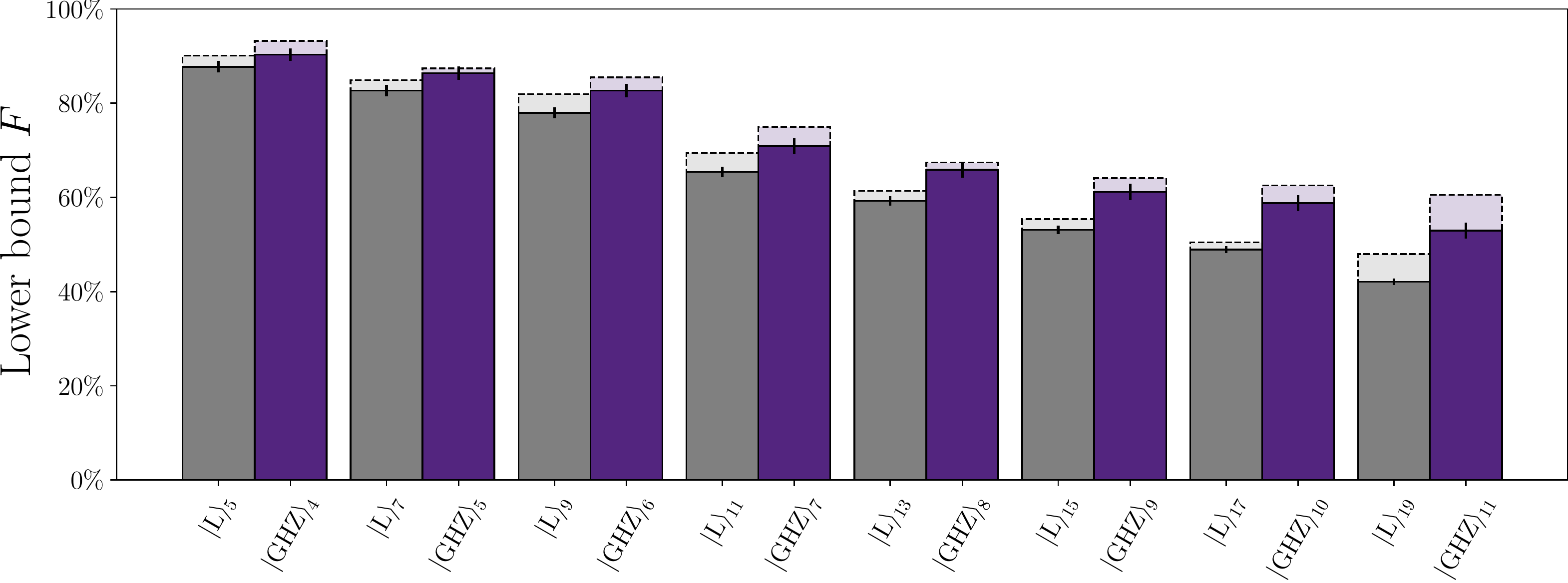}
	\caption{A lower bound of the fidelity of the rotated \textcolor{gray}{linear cluster} and \textcolor{violet}{\GHZtext{}} states prepared on the \texttt{IBMQ Montreal} device. 
  We prepared rotated (see Eq.~\ref{eq:rotatedlin}) linear cluster states $\ket{\psi}_{n}$   for $n \in \{5,7,\dots,19\}$ and extracted \GHZ{m} states for $m \in \{4,5,\dots, 11\}$
  using the maximal pattern introduced in Sec.~\ref{sec:results}. 
 For states with a higher number of qubits, the lower bound on the linear cluster state is increasingly worse due to a technical aspect of the estimation method (see App.~\ref{app:fidelityestimate} for details). The results are ordered such that each linear cluster state \Lin{n} is paired with the \GHZ{m} state extracted from it.
	}\label{fig:fidelitylowerbound}
\end{figure}

Figure~\ref{fig:fidelitylowerbound} shows the lower bounds on the fidelity for all linear cluster states that we generated with the \texttt{IBMQ Montreal} device --as well as for the \GHZtext{} states we extracted from them.
Note that our estimation method imposes a relative penalty for linear cluster state fidelity estimations compared to \GHZtext{} state fidelity estimations.
However, this does not mean that the fidelity of the \GHZtext{} states is truly higher than that of the linear cluster states from which they were extracted:
It simply means that we have used a method of bounding the fidelity from below, which works comparatively better for \GHZtext{} states than it does for linear cluster states. For the details of the estimation method we refer to App.~\ref{app:fidelityestimate}.


\section{Discussion}\label{sec:discussion} 
In this letter, we considered how to establish \GHZtext{} states between nodes that are share entanglement only with only a small number of nearest neighbours. In particular, given a linear cluster state shared between the nodes, we showed what are the possible \GHZtext{} states that can be obtained, the later being an indispensable resource in many quantum communication protocols including secret sharing \cite{GHZ_Secret_sharing}, electronic voting \cite{xue_simple_2017} and anonymous conference key agreement \cite{hahnAnonymousQuantumConference2020}.

Our results demonstrate that this process is possible but costly, since almost half of the linear cluster state qubits must be measured to obtain a \GHZtext{} state on the remaining qubits. Very importantly, we showed that there is in fact a tight upper bound to the size of the \GHZtext{} state we can obtain, higher than the one previously conjectured \cite{briegelPersistentEntanglementArrays2001}, thus solving a long-lasting open problem. We finally gave an exhaustive characterization of all possible \GHZtext{} states that can be extracted and provided a constructive method to obtain them, including the calculations for the necessary local rotations on the remaining qubits.

Our theoretical results are complemented by an implementation on IBM's superconducting quantum hardware, where this near-neighbour architecture is inherent. 
With fidelities of $80\%$ and higher for up to nine qubits, the results show that the generation of multi-partite entangled states is possible. 
It is also evident from the shown results that our method of extracting \GHZtext{} states does not compromise the fidelity of the target states compared to the resource states, since only local operations are required.  
Since the generation of linear cluster states can be, depending on the specific setting, experimentally more feasible, our approach shows a potentially more robust method of generating \GHZtext{} states.

Finally, extending our methods to other simple graph state resources does not seem trivial and requires further research. 
We note, however, that deriving an analogous characterisation for ring graph states, \ie graph states in which the leftmost and rightmost qubits of an otherwise linear cluster state are also connected, is straightforward using our methods.
In this case, only a single $2$-island is possible, so the upper bound for $\GHZsize{}$ becomes $\lfloor\frac{n+1}{2}\rfloor$. 

\section{Acknowledgements}\label{sec:acknowledgements}
A.P., J.d.J.~and F.H.~acknowledge support from the German Research Foundation (DFG, Emmy Noether Grant No. 418294583) and from the European Union via the Quantum Internet Alliance project. Upon completion of this work, we became aware of work similarly motivated~\cite{mannalathMultipartyEntanglementRouting2022,frantzeskakis2022extracting}.

\bibliography{Lin2GHZ_paper}{}

\appendix
\section{Proof of Lemma \ref{lem:2islands}}\label{app:proof_theorem}
In this section we proof Lemma~\ref{lem:2islands}. We prove the theorem by contradiction. 
Fix a set \GHZnodes{} such that $\{i, i+1\} \subset \GHZnodes{}$ and let the post-measurement state $\ket{\psi}_{\GHZnodes{}}$ be locally equivalent to $\GHZ{\GHZnodes{}}$.  Assume now that there are both $j < i$ and $k > i+1$ for which both $j, k \in \GHZnodes{}$. W.l.o.g. assume that $j$ and $k$ are the direct left- and right neighbour of $i$ and $i+1$, respectively.

Recall that a set of generators for the linear cluster state is $\{l_{i_{0}} = \sigma_{z}^{i_{-}}\sigma_{x}^{i_{0}}\sigma_{z}^{i_{+}}\}_{i_{0} \in \network}$. If the post-measurement state is locally equivalent to the \GHZtext{} state then there must exist a (reversible) generator transformation such that their support on \node{i} and \node{i+1} coincides exactly with (the generators of) the \GHZtext{} state - up to local Clifford rotations. We will now show that, from a reversible transformation of the $\{l_{i_{0}}\}$, it is impossible to obtain such a set of generators when $j, i, i+1, k \in \GHZnodes{}$. This directly implies that a measurement pattern such that the \GHZtext{} state can be obtained is not possible. 

(A set of) generators for the \GHZtext{} state are, $\{\sigma_{x}^{\GHZnodes{}}\} \cup \{\sigma_{z}^{i_{0}}\sigma_{z}^{i_{+}}\}_{i_{0}\in \GHZnodes{}}$, where it is implied that $\sigma_{z}^{i_{+}} = 1$ when $i_{+} \not \in \GHZnodes{}$. Focusing on \node{i} and \node{i+1}, the only generators with non-trivial support are $\{\alpha, \beta, \gamma, \delta\} = \{\sigma_{a_{i}}^{i},  \sigma_{a_{i}}^{i} \sigma_{a_{i+1}}^{i+1}, \sigma_{a_{i+1}}^{i+1}, \sigma_{b_{i}}^{i} \sigma_{b_{i+1}}^{i+1}\}$, where $a_{i}, a_{i+1}, b_{i}, b_{i+1} \in \{x,y,z\}$ reflect the fact that the state is \textit{locally equivalent} to the \GHZtext{} state. This implies that $a_{i} \not = b_{i}$ and $a_{i+1} \not = b_{i+1}$.

Similarly, only the generators $l_{i-1}, l_{i}, l_{i+1}$ and $l_{i+2}$ of the linear cluster state (\ie those with support on \node{i} or \node{i+1}) can have a non-trivial contribution to the generator transformation on the vertices in question. Therefore, w.l.o.g., we can focus on just these four generators and restrict our attention to vertices \node{i} and \node{i+1}. If we show that there is no reversible transformation of $\{l_{k}\}_{k=\{i-1, i, i+1, i+2\}}$ to obtain $\{\alpha, \beta, \gamma, \delta\}$ when only considering these nodes, the lemma follows. We show there is no such transformation by exhaustive contradiction. 

There are three different ways of creating generator $\alpha$: \textbf{i)} $\alpha \propto l_{i-1} = \sigma_{z}^{i}$, \textbf{ii)} $\alpha \propto l_{i} \circ l_{i+2} = \sigma_{x}^{i}$, \textbf{iii)} $\alpha \propto l_{i-1} \circ l_{i} \circ l_{i+2} = \sigma_{y}^{i}$, where `$\alpha \propto l_{i-1}$' should be read as `$l_{i-1}$ takes the role of $\alpha$', and $\circ$ denotes the (qubit-wise) product (e.g. $l_{i} \circ l_{i+1} = \sigma_{x}^{i}\sigma_{z}^{i+1} \circ \sigma_{z}^{i}\sigma_{x}^{i+1} \hat{=} \sigma_{y}^{i}\sigma_{y}^{i+1}$, where the last equality is up to an irrelevant global phase). Similarly, there are three different ways of creating generator $\gamma$: \textbf{j)} $\gamma \propto l_{i+2} = \sigma+{z}^{i+2}$, \textbf{jj)} $\gamma \propto l_{i-1} \circ l_{i+1} = \sigma_{x}^{i+2}$, \textbf{jjj} $\gamma \propto l_{i-1} \circ l_{i+1} \circ l_{i+2} = \sigma_{y}^{i+2}$. Picking \eg \textbf{i)} and \textbf{j)} one sees that $\beta$ is fixed at $\propto \sigma_{z}^{i}\sigma_{z}^{i+1}$. But this is $l_{i-1} \circ l_{i+2} \propto \alpha \circ \gamma$, which would not be a reversible transformation of the generators $l_{i-1}, l_{i}, l_{i+1}$ and $l_{i+2}$.
Any other pair from $\{\textbf{i)},\textbf{ii)},\textbf{iii)}\}$ and $\{\textbf{j)},\textbf{jj)},\textbf{jjj)}\}$ would also necessitate such a non-reversible transformation.

In essence, when viewing the generators as vectors over $\mathbf{F}^{2n}_{2}$ through the binary representation \cite{DehaeneCliffordgroup2003}, the argument follows from the observation that (the vector associated with) $\beta$ lies in the subspace spanned by (the vectors associated with) $\alpha$ and $\gamma$. As such there can never be a reversible (\ie basis-preserving) operation on (the vectors associated with) $l_{i-1}, l_{i}, l_{i+1}$ and $l_{i+2}$ that obtains $\alpha, \beta$ and $\gamma$.

\section{Local-Clifford corrections to obtain \GHZtext{} states.}\label{app:corrections}
We provide a jupyter notebook for determining the required correction operations under 
\cite{githublink}. 

\section{Estimation of lower bound for fidelity in the demonstrations.}\label{app:fidelityestimate}
We here provide details for the method of estimation of the lower bound of the fidelity of both the linear cluster state and \GHZtext{} state, that has been used in the demonstrational implementation. The method is presented in and adapted from \cite{TiurevFidelitymeasurement2022} using insights originally presented in \cite{TothDetectingGenuine2005}.
The state that is prepared is
\begin{equation}\label{eq:rotatedlinapp}
    \ket{\psi}_{n} = \bigotimes_{i \text{ odd}}H^{i} \Lin{n},
\end{equation}
which is a linear cluster state rotated by Clifford operations and thus a stabilizer state.
Note that the generators $G^{L}$ for the stabilizer of $\ket{\psi}_{n}$ can be grouped into `odd' generators $G^{L}_{o} = \{\sigma_{z}^{i-1}\sigma_{z}^{i} \sigma_{z}^{i+1}\}_{i \text{ odd}}$ and `even' generators $G^{L}_{e} = \{ \sigma_{x}^{i-1} \sigma_{x}^{i} \sigma_{x}^{i+1} \}_{i \text{ even}}$, where again $\sigma_{z}^{0} = \sigma_{z}^{n+1} = 1$.
The fidelity of the prepared state $\rho$ with the rotated linear cluster state is $F(\rho, \ket{\psi}_{n}) = \tr \left[\rho \ketbra{\psi}{\psi}_{n}\right]$. 
Writing $G_{o(e)} = \prod_{g \in G^{L}_{o(e)}} \frac{\mathbb{I} + g}{2}$, 
and using $\ketbra{\psi}_{n} = \prod_{g \in G}\frac{\mathbb{I} + g}{2} = G_{o}G_{e}$, 
we can write 
\begin{align}
    F(\rho, \ket{\psi}_{n}) &= \tr\left[G_{o}G_{e}\rho\right] \\
    &= \tr\left[G_{o}\rho\right] + \tr\left[G_{e}\rho\right] - \tr\left[\mathbb{I} \rho \right] + \tr\left[K\rho\right],
\end{align}
where $K = \left(\mathbb{I} - G_{o}\right)\left(\mathbb{I} - G_{e}\right)$.
$K$ is positive semidefinite and thus we can discard the last term to obtain a lower bound for the fidelity:
\begin{equation}
	F(\rho, \ket{\psi_{n}}) \geq \mathbb{E}\left[G_{o}\right] + \mathbb{E}\left[G_{e}\right] - 1,
\end{equation}
where $\mathbb{E}\left[G_{o (e)}\right] = \frac{1}{2^{\abs{\mathcal{S}_{o (e)}}}}\sum_{\sigma \in \mathcal{S}_{o(e)}} \tr \left[\rho \sigma\right]$ with $\mathcal{S}_{o(e)} = \langle G^{L}_{o(e)}\rangle \subset \mathcal{S}$ the subgroup generated by the `odd' (`even') generators of $\ket{\psi}_{n}$.
Notably, all terms $\tr \left[\rho \sigma\right]$ comprise of only $\sigma_{z}$-basis ($\sigma \in \mathcal{S}_{o}$) or $\sigma_{x}$-basis ($\sigma \in \mathcal{S}_{e}$) measurements. This means that just two measurement settings suffice to estimate the lower bound: measuring all vertices in the $\sigma_{z}$-basis, and measuring all vertices in the $\sigma_{x}$-basis. By repeating these measurements $32000$ times and obtaining the outcome statistics, we estimate all terms $\tr \left[\rho \sigma\right]$ by selecting the outcomes associated with the $+1$ and $-1$ eigenspaces of all different observables.

For the \GHZtext{} state we use a similar method, where we now group the generators $G^{G}$ of the \GHZtext{} state into $G_{o}^{G} = \{\sigma_{x}^{\GHZnodes{}}\}$ and $G_{e}^{G} = \{\sigma_{z}^{j}\sigma_{z}^{j_{+}}\}_{j \in \GHZnodes{}}$, which again allows for an estimate of the lower bound with just two measurement settings. A caveat is that now there is only one `odd' generator and thus $\mathbb{E}\left[G^{G}_{o}\right] = \frac{1}{2}\tr\left[\rho \mathbb{I}\right] + \frac{1}{2}\tr\left[\rho \sigma_{x}^{\GHZnodes{}}\right]$. 
By definition $\tr\left[\rho \mathbb{I}\right] = 1$ and therefore the expectation value is more skewed towards $1$ than for the linear cluster state estimation. In other words it gives a higher bound on the fidelity when compared to the linear cluster state, since $G^{L}_{o} = O(2^{n})$ and as such the identity does not have such a strong impact on the estimate, especially for larger linear cluster states. To give another comparison between the two states, Figure~\ref{fig:fidelitylowerboundnoI} contains the same results as Figure~\ref{fig:fidelitylowerbound} from the main text, but with the identity-term omitted. This gives a lower but more equal estimate for both classes of states.

\begin{figure}[h]
	\includegraphics[width = \linewidth]{./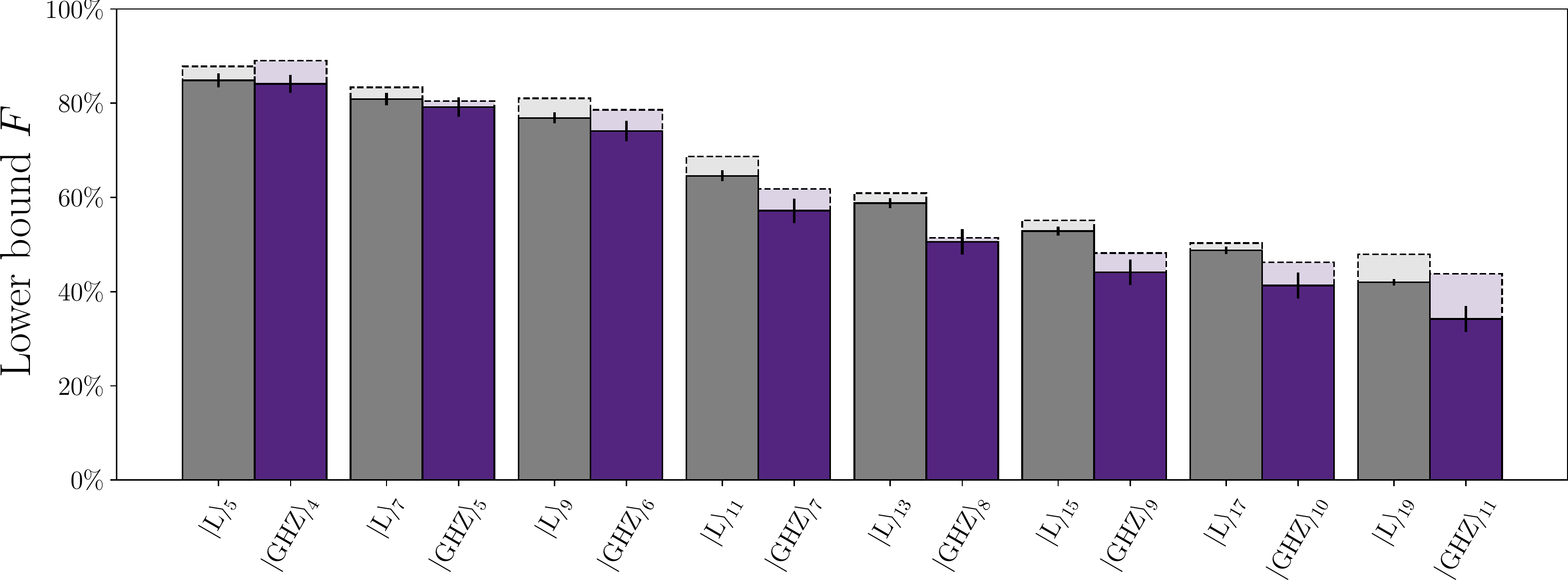}
	\caption{Lower bound on the fidelity using an adapted estimate method. In comparison with Figure~\ref{fig:fidelitylowerbound}, positive terms that favour the \GHZtext{} states are dropped, which renders a lower but more equal estimate on the fidelities for all states.
	}\label{fig:fidelitylowerboundnoI}
\end{figure}

\end{document}